\documentclass{stacs_proc}

\def\A{\mathcal A}
\def\B{\mathcal B}
\def\NP{\ensuremath{\mathrm{NP}}}
\def\SAT{\ensuremath{\mathrm{SAT}}{}}

\def\dNP{\ensuremath{\mathrm{DistNP}}{}}

\def\u{\underline}

\urldef{\surl}\url{http://logic.pdmi.ras.ru/{~arist,~sergey}/}

\begin{document}
\title[Complete One-Way Functions]{New Combinatorial Complete One-Way Functions}

\thanks{
Supported in part by INTAS (YSF fellowship 05-109-5565) and
RFBR (grants 05-01-00932, 06-01-00502).
}
\author[lab]{A. Kojevnikov}{Arist Kojevnikov}
\address[lab]{St.Petersburg Department of V.~A.~Steklov Institute of Mathematics
\newline Fontanka $27$, St.Petersburg, Russia, $191023$
}
\author[lab]{S. I. Nikolenko}{Sergey I. Nikolenko}
\urladdr{\surl}
\keywords{cryptography, complete problem, one-way function}
\subjclass{E.3, F.1.1, F.1.3}
\begin{abstract}
In 2003, Leonid A. Levin presented the idea of a combinatorial complete one-way function
and a sketch of the proof that Tiling represents such a function. In this paper,
we present two new one-way functions based on semi-Thue string rewriting systems
and a version of the Post Correspondence Problem and prove their completeness.
Besides, we present an alternative proof of Levin's result.
We also discuss the properties a combinatorial problem should have in order to
hold a complete one-way function.
\end{abstract}
\maketitle

\stacsheading{2008}{457-466}{Bordeaux}
\firstpageno{457}

\section{Introduction}

In computer science, complete objects play an extremely important role.
If a certain class of problems has a complete representative, one can
shift the analysis from the whole class (where usually nothing can really be proven) to 
this certain, well-specified complete problem. Examples include Satisfiability and 
Graph Coloring for \NP{} (see \cite{GJ79} for a survey) or, which is more closely 
related to our present work, Post Correspondence and Matrix Transformation
problems for \dNP{} \cite{Gur91,BG95}.

However, there are problems that are undoubtedly complete for their 
complexity classes but do not
actually cause such a nice concept shift because they are too hard to analyze.
Such problems usually come from diagonalization procedures and require enumeration of
all Turing machines or all problems of a certain complexity class.

Our results lie in the field of cryptography. For a long time, little has been known about
complete problems in cryptography. While ``conventional'' complexity classes got their
complete representatives relatively soon, it had taken thirty years since the definition
of a public-key cryptosystem \cite{DH76} to present a complete problem for the class 
of all public-key cryptosystems \cite{Harnik, GHP06}. However, this complete problem is of 
the ``bad'' kind of complete problems, requires enumerating all Turing machines and can hardly be put to any 
use, be it practical implementation or theoretical complexity analysis.

Before tackling public-key cryptosystems, it is natural to ask about a seemingly
simpler object: one-way functions (public-key cryptography is equivalent 
to the existence of a trapdoor function, a particular case of a one-way function).
The first big step towards useful complete one-way functions was taken by Leonid A. Levin
who provided a construction of the first known complete one-way function \cite{Lev87}
(see also \cite{Goldreich}).

The construction uses a universal Turing machine $U$ to compute the following function:
$$f_{uni}(\mathrm{desc}(M), x) = (\mathrm{desc}(M), M(x)),$$
where $\mathrm{desc}(M)$ is the description of a Turing machine $M$. 
If there are one-way functions
among $M$'s (and it is easy to show that if there are any, there are 
one-way functions that run in, say, quadratic time), then $f_{uni}$ is a (weak) one-way function.

As the reader has probably already noticed, this complete one-way function 
is of the ``useless''
kind we've been talking about. Naturally, Levin asked whether it is possible to find
``combinatorial'' complete one-way functions, functions that would not depend on 
enumerating Turing machines or giving their descriptions as input. For $15$ years, the problem 
remained open and then was resolved by Levin himself \cite{Lev03}. Levin devised a clever trick of 
having determinism in one direction and indeterminism in the other.

Having showed that a modified Tiling problem is in fact a complete one-way function, 
Levin asked
to find other combinatorial complete one-way functions. In this work, we answer this 
open question.
We take Levin's considerations further to show how a complete one-way function may
be derived from string-rewriting problems shown to be average-case complete in 
\cite{Wan95b} and a variation of the Post Correspondence Problem.
Moreover, we discuss the general properties a combinatorial problem should enjoy 
in order to contain a
complete one-way function by similar arguments.

\section{Distributional Accessibility problem for semi-Thue systems}

Consider a finite alphabet $\A$. An ordered pair
of strings $\langle g,h\rangle$ over $\A$ is called a \emph{rewriting rule} 
(sometimes also called a \emph{production}). We write these pairs as $g\to h$ because we interpret them as rewriting 
rules for other strings. Namely, for two strings $u$, $v$ we write $u\Rightarrow_{g\to h}v$ if $u=agb$, 
$v=ahb$ for some $a, b \in\A^*$. A set of rewriting rules is called a 
\emph{semi-Thue system}. For a semi-Thue system $R$, we write $u\Rightarrow_Rv$ if
$u\Rightarrow_{g\to h}v$ for some rewriting rule $\langle g, h\rangle\in R$. Slightly abusing notation,
we extend it and write $u\Rightarrow_Rv$ if there exists a finite sequence of
rewriting rules $\langle g_1, h_1\rangle ,\ldots,\langle g_m, h_m\rangle \in R$ such that
$$u=u_0\Rightarrow_{g_1\to h_1} u_1\Rightarrow_{g_2\to h_2}u_2\Rightarrow \ldots\Rightarrow_{g_m\to h_m}u_m=v.$$
For a more detailed discussion of semi-Thue systems we refer the reader to \cite{BO93}.

We can now define the distributional accessibility problem for semi-Thue systems:

\begin{center}\parbox{120mm}
{
\vspace{3mm}
\emph{Instance}. A semi-Thue system 
$R=\{\langle g_1,h_1\rangle,\ldots,\langle g_m,h_m\rangle\}$, 
two binary strings $u$ and $v$,
a positive integer $n$. The size of the instance is $n+|u|+|v|+\sum_1^m(|g_i|+|h_i|)$.

\emph{Question}. Is $u\Rightarrow^n_R v$?

\emph{Distribution}. Randomly and independently choose positive integers 
$n$ and $m$ and binary strings $u$ and $v$. Then randomly and independently choose binary strings 
$g_1,h_1,\ldots,g_m,h_m$. Integers and strings are chosen with the default uniform
probability distribution, namely the distribution proportional to $\frac1{n^2}$
for integers and proportional to $\frac{2^{-|u|}}{|u|^2}$ for binary strings.
\vspace{3mm}}\end{center}

In \cite{WB95}, this problem was shown to be complete for \dNP.

For what follows, we also need another notion of derivation in semi-Thue systems. 
Namely, for
a semi-Thue system $R$ we write $u\Rightarrow_R^* v$ if $u=agb$, $v=ahb$ for some 
$\langle g,h\rangle\in R$ and, moreover, there does not exist another rewriting rule $\langle g',h'\rangle\in R$ 
such that $u=a'g'b'$ and $v=a'h'b'$ for some $a',b'\in\A^*$. Similarly to $\Rightarrow_R$, we extend $\Rightarrow_R^*$ to 
finite chains of derivations. In other words, $u\Rightarrow_R^* v$ if $u\Rightarrow_R v$, and on each 
step of this derivation there was only one applicable rewriting rule. This uniqueness
(or, better to say, determinism) is crucial to perform Levin's trick. We also write
$u\Rightarrow_R^{*,n} v$ if $u\Rightarrow_R^* v$ in at most $n$ steps.

\section{Post Correspondence Problem}
The following problem was proven to be complete for \dNP{} in \cite{Gur91} 
(see also Remark $2$ in \cite{BG95}):
\begin{center}\parbox{120mm}
{
\vspace{3mm}
\emph{Instance}. A positive integer $m$, pairs
$\Gamma=\{\langle u_1,v_1\rangle,\ldots,\langle u_m,v_m\rangle\}$, a
binary string $x$,
a positive integer $n$. The size of the instance is $n+|x|+\sum_1^m(|u_i|+|v_i|)$.

\emph{Question}. Is $u_{i_1}\cdots u_{i_k} = u v_{i_1}\cdots v_{i_k}$ for some $k\leq n$?

\emph{Distribution}. Randomly and independently choose positive integers 
$n$ and $m$ and binary string $x$. Then randomly and independently choose binary strings 
$u_1,v_1,\ldots,u_m,v_m$. Integers and strings are chosen with the default uniform
probability distribution.
\vspace{3mm}}\end{center}
We need a modification of this problem. Namely, we pose the question as follows: does
$$
u_{i_1}\cdots u_{i_k} y = x v_{i_1}\cdots v_{i_k}
$$ 
hold for some $y$? If we remove the restriction $n$, this problem is
undecidable, but the bounded version is not
known to be complete for \dNP.

Given a nonempty list
$\Gamma=\left(\langle u_1,v_1\rangle, \ldots, \langle u_m,v_m\rangle\right)$
of pairs of strings, it
will be convenient to view the function based on modified Post Correspondence Problem 
as a derivation with
pairs from $\Gamma$ as inference rules. A string $x$
\emph{yields} a string $y$ \emph{in one step} if there is a pair
$\langle u,v\rangle$ in $\Gamma$ such that $uy=xv$. The
``\emph{yield}'' relation $\vdash_\Gamma$ is defined as the transitive closure of the
``yield-in-one-step'' relation.

To perform Levin's trick, we need to get rid of the indeterminism. This time, the description
of a deterministic version of $\vdash^*$ is more complicated than in the case of semi-Thue
systems. If we simply required it to be deterministic, we would not be able to move the head of
the Turing machine to the left. To solve this problem, we have to look ahead by one step:
if one of the two branches fails in two steps, we consider the choice deterministic.

Formally speaking, we write $x\vdash^*y$ if there are no more than two pairs
$\langle p,s\rangle, \langle p',s'\rangle\in\Gamma$ such that
$py=xs$ and $p'y'=xs'$ for some strings $y$, $y'$ (where $y\neq y'$, but $p$ may equal $p'$: two
possible different applications of the same rule are still nondeterministic) and,
moreover, we cannot apply any rule in $\Gamma$ to $y'$.
We write $u\vdash_\Gamma^{*,n} v$ if $u\vdash_\Gamma^* v$
in not more than $n$ steps.

\section{Complete One-Way Tiling Function}
Before presenting our own construction, we recall Levin's complete 
one-way function from \cite{Lev03}. In fact, we slightly modify
Levin's construction and present an alternative proof based on ideas from \cite{Wan99}. 
The difference with the original Levin's construction is that he 
considered the tiling function for tiles with marked corners, namely, the 
corners of tiles, instead of edges, are marked with symbols. In the tiling 
of an $n\times n$ square, symbols on touching corners of adjacent tiles 
should match. 

A \emph{tile} is a square with a symbol for a finite alphabet $\A$ on each size
which may not be turned over or rotated. We assume that there exist infinite copies of 
each tile. By a \emph{tiling} of an $n\times n$ square we mean a set of 
$n^2$ tiles covering the square in which the symbols on the common sides of 
adjacent tiles are the same.

It will be convenient for us to consider Tiling as a string transformation system.
Fix a finite set of tiles $T$. We say that $T$ \emph{transforms} a string $x$ to $y$, 
$|x|=|y|$, if there is a tiling of an $|x|\times |x|$ square with $x$ on the bottom
and $y$ on top. We write $x\longrightarrow_Ty$ in this case. By a \emph{tiling process} we mean the
completion of a partially tiled square by one tile at the time. 
Similarly to semi-Thue systems, we 
define $x\longrightarrow^*_Ty$ if and only if $x\longrightarrow_Ty$ with
an additional restriction: we permit the 
extension of a partially tiled square only if the possible extension is unique. 

\begin{definition}
The \emph{Tiling simulating function} (Tiling) is the function $f:\A^*\to\A^*$ defined as follows:
\begin{itemize}
\item if the input has the form $(T, x)$ for a finite set of tiles $T$ and a string $x$, then:
  \begin{itemize}
  \item if $x\longrightarrow^*_Ty$, then $f(T,x)=(T,y)$;
  \item otherwise, $f$ returns its input;
  \end{itemize}
\item otherwise, $f$ returns its input.
\end{itemize}
\end{definition}
\begin{theorem}
If one-way functions exist, then Tiling is a weakly one-way function.
\end{theorem}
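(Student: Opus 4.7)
The plan is to carry out Levin's trick in the Tiling setting by reducing inversion of a presumed one-way function to inversion of $f$. Suppose some one-way function $g$ exists; by standard amplification and padding arguments we may assume $g$ is length-preserving and computable by a deterministic Turing machine $M$ running in time $n^2$ on inputs of length $n$. I would first construct, once and for all, a finite tile set $T_M$ over a fixed alphabet which simulates $M$'s space--time diagram in the standard way: each horizontal row of a tiled square encodes a configuration of $M$, and vertical edge colours force the transition relation between consecutive configurations. The input $x$ (together with an initial-state marker) is encoded along the bottom edge, and the top edge of a tiling of an $|x|\times|x|$ square carries $M(x)=g(x)$.

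Next I would verify that $T_M$ can be engineered so that the tiling process is deterministic in the sense of $\longrightarrow^*_{T_M}$: filling the square row by row from the bottom up and left to right, the tile at each new position is uniquely forced by the row below and the tile to its left, precisely because $M$ itself is deterministic. This gives a polynomial-time procedure for evaluating $f(T_M,x)$: execute the forced extension for $|x|$ rows and read off the top edge $y$. For inputs not of the form $(T,x)$, or on which the forced extension is at some step not unique, $f$ just returns its input, exactly as prescribed by the definition.

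For hardness, I would argue by contradiction: assume a polynomial-time algorithm $A$ inverts $f$ on a $(1-1/p(n))$-fraction of inputs of length $n$ for every polynomial $p$, and build from it a polynomial-time inverter for $g$. The crucial point is that even though $T_M$ is a fixed object, under the uniform distribution on binary strings of length $n$ the event ``the input parses as $(T_M,x)$ for some $x$ with $|x|\approx\sqrt{n}$'' occurs with at least some constant probability $c>0$, because $T_M$ has constant description length in any reasonable self-delimiting encoding. On such inputs, $A$ applied to $f(T_M,x)=(T_M,y)$ must return some $(T_M,x')$ with $M(x')=g(x)$, which inverts $g$ on $g(x)$. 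Taking $p(n)$ below $1/c$ for large $n$ forces $A$ to fail on a noticeable fraction of $g$-instances, contradicting the one-wayness of $g$.

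The main obstacle will be the careful design of $T_M$ so that uniqueness of extension holds at \emph{every} intermediate partial tiling reached by the chosen tiling order, and not just at ``generic'' positions: one has to suppress spurious alternative tiles along the moving boundary of the partially tiled region and properly initialize the bottom row. A secondary technical concern is quantifying the probability that a random binary string parses as a well-formed $(T_M,x)$ instance; this requires fixing a concrete encoding of tile sets and inputs so that this probability is at least inverse polynomial (in fact constant), ensuring that merely weak inversion of $f$ already suffices to break $g$.
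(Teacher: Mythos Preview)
Your proposal is correct and takes essentially the same route as the paper: encode a quadratic-time machine $M$ for the assumed one-way function $g$ as a fixed tile set $T_M$ whose deterministic row-by-row tiling simulates $M$'s space--time diagram, record the equivalence ``$M(x)=y$ iff the suitably padded bottom row tiles to the padded top row'' as a lemma, and finish with the constant-probability argument that the paper, like you, defers to~\cite{Goldreich}. The only place where the paper is more explicit is in writing out the concrete tiles and the padded bottom string $\$sxB^{n(n-1)}\#$ (so the square has side about $n^2$ rather than $|x|$), which is exactly what resolves the two technical concerns you flag at the end.
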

\begin{proof}
	  
Let $Q$ be the set of states of a Turing machine $M$, $s$ be the initial state 
of $M$, $h$~--- the halting state, $\pi_M$~--- the transition function of 
$M$, $\{0, 1, B\}$~--- the tape symbols. By $\$$ we denote the begin marker and by
$\#$~--- the end marker. We also introduce a new symbol 
for each pair from $Q\times \{0,1,B\}$. We now present the construction
of a tileset $T_M$.
\begin{enumerate}
\item For each tape symbol $a\in \{0, 1, B\}$ we add
$$
\begin{picture}(150,40)
\put(0,10){\line(1,0){20}}
\put(0,10){\line(0,1){20}}
\put(20,30){\line(-1,0){20}}
\put(20,30){\line(0,-1){20}}
\put(7,0){$a$}
\put(7,35){$a$}
\put(100,10){\line(1,0){20}}
\put(100,10){\line(0,1){20}}
\put(120,30){\line(-1,0){20}}
\put(120,30){\line(0,-1){20}}
\put(99,0){$(h,a)$}
\put(99,35){$(h,a)$}
\end{picture}
$$
\item For each $a, b, c\in \{0, 1, B\}$, $q\in Q\setminus \{h\}$,
$p\in Q$, if $\pi_M(q,a)=(p,b,R)$ we add
$$
\begin{picture}(150,40)
\put(0,10){\line(1,0){20}}
\put(0,10){\line(0,1){20}}
\put(20,30){\line(-1,0){20}}
\put(20,30){\line(0,-1){20}}
\put(-1,0){$(q,a)$}
\put(7,35){$b$}
\put(23,17){$p$}
\put(100,10){\line(1,0){20}}
\put(100,10){\line(0,1){20}}
\put(120,30){\line(-1,0){20}}
\put(120,30){\line(0,-1){20}}
\put(107,0){$c$}
\put(99,35){$(p,c)$}
\put(90,17){$p$}
\end{picture}
$$

\item For each $a, b, c\in \{0, 1, B\}$, $q\in Q\setminus \{h\}$,
$p\in Q$, if $\pi_M(q,a)=(p,b,L)$ we add
$$
\begin{picture}(150,40)
\put(0,10){\line(1,0){20}}
\put(0,10){\line(0,1){20}}
\put(20,30){\line(-1,0){20}}
\put(20,30){\line(0,-1){20}}
\put(-1,0){$(q,a)$}
\put(7,35){$b$}
\put(-10,17){$p$}
\put(100,10){\line(1,0){20}}
\put(100,10){\line(0,1){20}}
\put(120,30){\line(-1,0){20}}
\put(120,30){\line(0,-1){20}}
\put(107,0){$c$}
\put(99,35){$(p,c)$}
\put(125,17){$p$}
\end{picture}
$$
\item Finally, for $\$$ and $\#$ we add
$$
\begin{picture}(150,40)
\put(0,10){\line(1,0){20}}
\put(0,10){\line(0,1){20}}
\put(20,30){\line(-1,0){20}}
\put(20,30){\line(0,-1){20}}
\put(7,0){$\$ $}
\put(7,35){$\$ $}
\put(-10,17){$\$ $}
\put(100,10){\line(1,0){20}}
\put(100,10){\line(0,1){20}}
\put(120,30){\line(-1,0){20}}
\put(120,30){\line(0,-1){20}}
\put(105,0){$\#$}
\put(105,35){$\#$}
\put(123,17){$\#$}
\end{picture}
$$
\end{enumerate}

The following lemma is now obvious.

\begin{lemma}\label{tiling-turing}
For a deterministic Turing machine $M$ that works $n^2$ steps and its 
corresponding tiling system $T_M$,
$$
M(x)=y\mbox{, $|x|=|y|$, if and only if 
$\$sxB^{n(n-1)}\#\longrightarrow^*_{T_M}\$hyB^{n(n-1)}\#$}.
$$
\end{lemma}
The rest of the proof closely follows \cite{Goldreich}. Suppose that $g$
is a length-preserving one-way function that, on inputs of length $n$, works for time not exceeding $n^2$.
By Lemma~\ref{tiling-turing}, there exists a finite system of tiles $T_M$ such that 
$\$sxB^{n(n-1)}\#\longrightarrow^*_{T_M}\$hyB^{n(n-1)}\#$ is equivalent to $g(x)=y$.
Therefore, with constant probability solving Tiling is equivalent to inverting $g$.
\end{proof}

\section{A complete one-way function based on semi-Thue systems}\label{semi-thue-owf}

Our complete one-way function is based upon the distributional accessibility 
problem for semi-Thue systems. First,
we need to make this decision problem a function and then add Levin's trick in order to assure 
length-preservation.

\begin{definition}
The \emph{semi-Thue accessibility function} (STAF) is the function $f:\A^*\to\A^*$ that
defined as follows:
\begin{itemize}
\item if the input has the form 
$(\langle g_1, h_1\rangle,\ldots, \langle g_m, h_m\rangle, x)$,
consider the semi-Thue system
$\Gamma=\left(\langle g_1, h_1\rangle,\ldots, \langle g_m, h_m\rangle\right)$ and:
  \begin{itemize}
  \item if $x\Rightarrow^{*,t}_\Gamma y$, $t=|x|^2+4|x|+2$, there are no rewriting rules in 
  $\Gamma$ that may be applied to $y$, and $|y|=|x|$, $f(\Gamma, x)=(\Gamma, y)$;
  \item otherwise, $f$ returns its input;
  \end{itemize}
\item otherwise, $f$ returns its input.
\end{itemize}
\end{definition}

Obviously, STAF is easy to compute: one simply needs to use the first part of the input as
a semi-Thue system (if that's impossible, return input) and apply its rules until either
there are two rules that apply, or we have worked for $|x|^2+4|x|+2$ steps, or $y$ has been 
reached and no other rules can be applied.
In the first two cases, return input. In the third case, check that $|y|=|x|$ and
return $(\Gamma, y)$ if so and input otherwise.

\begin{theorem}
If one-way functions exist, then STAF is a weakly one-way function.
\end{theorem}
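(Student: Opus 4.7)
The plan is to mirror the tiling argument: build a semi-Thue system $\Gamma_M$ that deterministically simulates a Turing machine $M$ computing a length-preserving quadratic-time one-way function $g$, check that STAF applied to a starting configuration emits the halting configuration, and reduce inverting $g$ to inverting STAF on embedded instances.

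First I would construct $\Gamma_M$. Encode a configuration of $M$ as $\$\,w_1\cdots w_{i-1}\,(q,w_i)\,w_{i+1}\cdots w_N\,\#$ over the alphabet $\{0,1,B,\$,\#\}\cup(Q\times\{0,1,B\})$, tagging the head position with $(q,w_i)$. For each transition $\pi_M(q,a)=(p,b,R)$ with $q\neq h$ and each $c\in\{0,1,B\}$, include the rule $(q,a)\,c\to b\,(p,c)$; do the symmetric thing for left-moving transitions; and include no rule mentioning $h$. All rules preserve length. The state-tagged letter occurs uniquely in every reachable configuration, so at most one rule applies, and at exactly one position, which means $\Rightarrow_{\Gamma_M}^{*}$ agrees with $\Rightarrow_{\Gamma_M}$ along the simulated trajectory; and once $M$ halts, no rule applies. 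This yields the semi-Thue analogue of Lemma~\ref{tiling-turing}: writing $u_x=\$\,s\,x\,B^{|x|^2-|x|}\,\#$, one has $M(x)=y$ iff $u_x\Rightarrow_{\Gamma_M}^{*,t} u_y$ in $t\le|x|^2$ steps, with no rule applicable to $u_y$. Since $|x|^2\le|u_x|^2+4|u_x|+2$, this gives STAF$(\Gamma_M,u_x)=(\Gamma_M,u_{g(x)})$ whenever $|x|=|y|$, which is guaranteed by the length-preservation of $g$ and of the rules.

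Next, pick a length-preserving one-way $g$ of quadratic time complexity; such a $g$ exists whenever any one-way function does, by a standard padding argument. Any STAF preimage of $(\Gamma_M,u_{g(x)})$ must have the form $(\Gamma_M,u_{x'})$ with $g(x')=g(x)$, so a STAF inverter acts on embedded instances as a $g$ inverter. As in the tiling proof, this reduction together with a $1/\mathrm{poly}$ lower bound on the share of probability mass placed on the embedded family $\{(\Gamma_M,u_x):|x|=n\}$ yields weak one-wayness via the standard averaging step of \cite{Goldreich}: a polynomial-time STAF inverter with failure probability below $1/(2q(n^*))$, where $q$ governs that averaging, would invert $g$ with probability at least $1/2$, contradicting one-wayness of $g$.

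The step I expect to require the most care is exactly that fraction bound for the embedded family. The padding block $B^{|x|^2-|x|}$ makes the family very sparse among length-$(|x|^2+2)$ binary strings under the uniform measure, so the argument must lean on the $2^{-|u|}/|u|^2$ weighting from Section~2 and on fixing the number of rules and the lengths of the rule components that define $\Gamma_M$, rather than on naive counting. Once that Goldreich-style averaging is reproduced with the polynomial degree tuned to the $B$-block overhead, the rest of the argument is purely mechanical and runs parallel to the Tiling proof.
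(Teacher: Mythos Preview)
Your simulation of the Turing machine by length-preserving rewriting rules is fine, and the reduction from inverting $g$ to inverting STAF on the embedded instances is set up correctly. The gap is in the last step, the density bound you flag as ``requiring the most care'': it cannot be carried out with your encoding. Because you pad the initial configuration to $u_x=\$\,s\,x\,B^{|x|^2-|x|}\,\#$, the embedded family for each $n=|x|$ sits inside strings of length $N\approx n^{2}$, and only $2^{n}=2^{\sqrt{N}}$ of the $2^{N}$ strings of that length belong to it. The $2^{-|u|}/|u|^{2}$ weighting from Section~2 does not rescue this: it merely spreads mass across lengths, while within a fixed length it is uniform, so the family still carries only a $2^{-\Theta(N)}$ share of the total probability. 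No Goldreich-style averaging with a polynomial degree can absorb an exponentially small density; the argument would give only a negligible lower bound on the inverter's failure probability, not the $1/\mathrm{poly}$ bound that weak one-wayness demands.

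The paper's proof is engineered precisely to avoid this obstacle, and it does so by \emph{not} padding. It introduces a dynamic binary coding scheme with code length $l=2\log|x|+O(1)$ and takes the STAF input string to be $\underline{s}\,x\,\underline{\$}$ of length $|x|+2l$; the embedded family then has density $2^{-2l}=1/\mathrm{poly}(|x|)$, which is what the averaging step needs. To make this work without pre-allocating tape, the semi-Thue system is split into three blocks $R_1\cup R_2\cup R_3$: $R_1$ rewrites the raw binary $x$ into its coded form $\underline{x}$, $R_2$ simulates the machine on coded configurations (these rules are not all length-preserving, so the string may grow during the simulation), and $R_3$ decodes the result back to raw $y$, restoring the original length since $|y|=|x|$. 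The STAF step bound $|x|^{2}+4|x|+2$ is calibrated to this three-stage process ($2|x|+1$ steps for $R_1$, at most $|x|^{2}$ for $R_2$, $2|y|+1$ for $R_3$). Your one-stage simulation on a pre-padded tape is conceptually cleaner, but the simplicity is bought with the padding, and that is fatal for the probability step.
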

\begin{proof}

This time we need to encode Turing machines into the string-rewriting setting. Following 
\cite{Gur91,WB95,Wan99}, we have the following proposition:

\begin{proposition}
For any finite alphabet $\A$ with $|\A|>2$ and any pair of binary strings $x$ and $y$
there exists a \emph{dynamic binary coding scheme} of $\A$ with $\{0,1\}$ with the 
following properties.
\begin{enumerate}
\item All codes (binary codes of symbols of $\A$) have the same length $l=2\log|x|+O(1)$.
\item Both strings $x$ and $y$ are distinguishable from every code, that is, no code is a 
substring of $x$ or $y$.
\item If a nonempty suffix $z$ of a code $u$ is a prefix of a code $v$ then $z=u=v$ 
(one can always distinguish where
a code ends and another code begins).
\item Strings $x$ and $y$ can be written as a unique concatenation of binary strings 
$1$, $10$, $000$, and $100$ which are not prefixes of any code.
\end{enumerate}
\end{proposition}

Now let us define the semi-Thue system $R_M$ that corresponds to a Turing machine $M$.
The rewriting rules are divided into three parts: $R_M=R_1\cup R_2\cup R_3$. Let us denote
$\B=\{1,10,100,000\}$ and fix a dynamic binary coding scheme and denote by $\u w$
the encoding of $w$ in this scheme.

$R_1$ consists of the following rules for each $u\in\B$:
$$ \begin{array}{rcl}
\u su & \to & \u{\$us_1},\\
\u{s_1}u & \to & \u{us_1},\\
\u{us_1\$} & \to & \u{s_2u\$},\\
\u{us_2} & \to & \u{s_2u},\\
\u{\$s_2} & \to & \u{\$s}.
\end{array}
$$
These rules are needed to rewrite the initial string $\u{s}x\u{\$}$ into $\u{\$sx\$}$.
Since $x$ can be uniquely written as $u_1\ldots u_m$ for some $u_i \in \B$, this 
transformation can be carried out in $2m+1\leq 2|x|+1$ steps.

$R_2$ consists of rewriting rules corresponding to 
Turing machine instructions. 
By $h$ we denote the halting state, by $s$~--- the initial state, by $B$~--- the 
blank symbol, by $Q_M$~---
the set of states of $M$, by $\pi_M$~--- the transition function of $M$, and by 
$\$$ the begin/end marker. Then $R_2$ consists of the following pairs:
\begin{enumerate}
\item For each state $q\in Q_M\setminus\{h\}$, $p\in Q$, $a,b,c\in \{0,1,B\}$:
$$\pi_M(q,a) = (p,b,R)\quad\Rightarrow\quad  qac\to bpc, 
qa\$\to bpB\$ \in R_2.$$
\item For each state $q\in Q_M\setminus\{h\}$, $p\in Q$, $a,b,d \in \{0,1,B\}$ and 
$c\in \{0,1,\$\}$,
$$\pi_M(q,a) = (p,b,L)\quad\Rightarrow\quad dqac\to pdbc,
dqB\$\to pdbB\$ \in R_2$$
for $a\neq B$, $c\neq \$$, or $b\neq B$.
\end{enumerate}

$R_1$ and $R_2$ are completely similar to the construction presented in \cite{Wan99}. 
The third
part of his construction is supposed to reduce the result from $\u{\$sy\$}$, where $y$ is the result of the
Turing machine computation, to the protocol of the Turing machine that is needed to prove that non-deterministic
semi-Thue systems are \dNP-hard.

This time we have to deviate from \cite{Wan99}: we
need a different set of rules because we actually need the output of the machine, and not
the protocol. Thus, our version of $R_3$ looks like the following:
$$ \begin{array}{rcl}
\u{\$hu} & \to & \u{\$}u\u{s_5},\\
\u{s_5u} & \to & u\u{s_5},\\
\u{s_5u\$} & \to & u\u{s_6\$},\\
u\u{s_6} & \to & \u{s_6}u,\\
\u{\$s_6} & \to & \u h.
\end{array}
$$
This transformation can be carried out in at most $2|y|+1$ steps.

These rules simply translate $\u{y}$ back into the original $y$ and add $h$ in 
front of the output,
thus achieving the actual output configuration of the original Turing machine $M$.

The following lemma is now obvious.

\begin{lemma}\label{thue-turing}
For a deterministic Turing machine $M$ and its corresponding semi-Thue system $R_M$,
$$M(x)=y\mbox{ if and only if }\u s x\u\$\Rightarrow_{R_M}^{*,t} \u hy\u\$,$$ 
where $t= T+2|x|+2|y|+2$, $T$ being the running time of $M$ on $x$. 
\end{lemma}

Again, the rest of the proof follows the lines of \cite{Goldreich}.
There is a constant probability (for the uniform distribution, it is proportional to $\frac{1}{|R|^22^{|R|}}$)
that any given semi-Thue system appears as the first part of the input. Suppose that $g$
is a length-preserving one-way function. By \cite{Goldreich}, we can safely assume that
there is a Turing machine $M_g$ that computes $g$ and runs in quadratic time. By 
Lemma~\ref{thue-turing},
there exists a semi-Thue system $R_M$ such that 
$\u s x\u\$\Rightarrow_{R_M}^{*,t} \u hy\u\$$ is equivalent
to $g(x)=y$.
Therefore, with constant probability solving STAF is equivalent to inverting $g$.
\end{proof}

\section{A complete one-way function based on Post Correspondence}\label{pcp-owf}

In this section, we describe a one-way function based on the Post Correspondence Problem and 
prove that it is complete. The function is defined as follows.

\begin{definition}
The \emph{Post Transformation function} (PTF) is the function $f:\A^*\to\A^*$ defined as follows:
\begin{itemize}
\item if the input has the form 
$(\langle g_1, h_1\rangle,\ldots, \langle g_m, h_m\rangle, x)$,
considers the derivation system
$\Gamma=\left(\langle g_1, h_1\rangle,\ldots, \langle g_m, h_m\rangle\right)$ and:
  \begin{itemize}
    \item if $x\vdash^{*,n^4}_\Gamma y$, there are no rewriting 
    rules in $\Gamma$ that may be applied to $y$, and $|y|=|x|$, then $f(\Gamma, x)=(\Gamma, y)$;
    \item otherwise, $f$ returns its input;
  \end{itemize}
\item otherwise, $f$ returns its input.
\end{itemize}
\end{definition}

Now, we reduce the computation of a universal Turing machine
to Post Correspondence in the way described in \cite{Gur91}.
\begin{theorem}
If one-way functions exist, then PTF is a weakly one-way function.
\end{theorem}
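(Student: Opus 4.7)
The plan is to follow the blueprint of the previous two theorems: take a length-preserving one-way function $g$ running in quadratic time (which by the standard slowdown argument of~\cite{Goldreich} exists whenever any one-way function exists) computed by a Turing machine $M_g$, build from $M_g$ a fixed list of PCP pairs $\Gamma_{M_g}$, and prove a simulation lemma of the form
$$
g(x)=y\ \Longleftrightarrow\ \u{s}x\u\$\ \vdash^{*,n^4}_{\Gamma_{M_g}}\ \u{h}y\u\$.
$$
Once this lemma is in hand, the rest is verbatim the argument already used for STAF: with constant probability (depending only on the fixed $|\Gamma_{M_g}|$), a uniform random PTF instance carries $\Gamma_{M_g}$ as its rule list, and conditioned on this event, inverting PTF on a random image amounts to inverting $g$, contradicting the one-wayness of $g$.

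For the construction of $\Gamma_{M_g}$, I would use the Gurevich reduction from Turing machines to the Post Correspondence Problem~\cite{Gur91,BG95}, in which each Turing step is simulated by a block of PCP applications that shift a constant-size window along the coded tape configuration. The dynamic binary coding scheme from Section~\ref{semi-thue-owf} can be reused to encode the alphabet of $M_g$ in binary while keeping symbol boundaries unambiguous, and the markers $\u\$$ guard the two ends of the configuration. The ``protocol-to-output'' pass built into the semi-Thue rules $R_3$ has a direct PCP analogue: a handful of cleanup pairs that, once the halting state $h$ appears, consume coded symbols from the front and emit their plain binary pre-images at the back, ending with $\u{h}y\u\$$. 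Summing $O(n^2)$ simulated Turing steps, each costing $O(n)$ PCP applications, plus $O(n)$ setup and $O(n)$ cleanup steps, bounds the total derivation length by $O(n^3)$, well within the $n^4$ budget; length preservation $|y|=|x|$ follows from $g$ being length-preserving together with the cleanup pairs stripping exactly the overhead that the setup pairs introduced.

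The genuinely new difficulty, and the part I expect to spend the most effort on, is verifying that this simulation runs not merely as a chain of $\vdash$-steps but as a chain of \emph{deterministic} $\vdash^{*}$-steps with the one-step lookahead prescribed in Section~\ref{pcp-owf}. In the semi-Thue setting, the unique state symbol on the tape pins down the rewriting site; for PCP, by contrast, several pairs may share a short common prefix with the current string and hence be syntactically applicable at the same moment. The pairs of $\Gamma_{M_g}$ must therefore be designed so that whenever a spurious pair $\langle p',s'\rangle$ happens to be applicable at the current $x$, the resulting alternative string $y'$ carries either a corrupted code block or a stranded state/boundary marker that no pair of $\Gamma_{M_g}$ can match (even as a prefix) one step later, thereby witnessing the dead-end required by the definition of $\vdash^{*}$. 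The prefix-freeness guaranteed by item~3 of the coding-scheme proposition, together with the $\u\$$-guards at the ends of the configuration, is exactly what makes this arrangeable; the bulk of the technical work is then a rule-by-rule check across the setup, simulation, and cleanup phases and across their mutual boundaries.
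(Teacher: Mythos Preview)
Your plan is the paper's approach, but you add two layers the paper does without.

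First, the paper has no setup or cleanup phase. It works with the fully encoded configuration $\u{sxB}$ as the input string and $\u{hyB}$ as the output, so there is no analogue of $R_1$ or $R_3$; the simulation lemma is simply $M(x)=y\iff\u{sxB}\vdash_{\Gamma_M}^{*,n^4}\u{hyB}$. The resulting mismatch between the uniform distribution on $x$ and on $\u{x}$ is dispatched in a one-line remark after the proof (the two are related by a polynomial-time map, so weak one-wayness survives). Your mixed-format start string $\u{s}x\u\$$ with explicit decode/encode passes would work, but it is extra machinery that then forces you into the phase-boundary case analysis you worry about.

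Second, the determinism check is much lighter than you expect. The paper's $\Gamma_M$ consists only of a copy pair $\langle\u{a},\u{a}\rangle$ for each tape symbol $a$ together with one pair per Turing transition. One step of $M$ on configuration $xqy$ is simulated by $|x|$ copy applications, then one transition pair, then $|y|-1$ more copies. The \emph{only} genuine branching point is immediately before a left-move transition: there the copy pair for the symbol just left of the head is also applicable. But applying that wrong copy produces a string in which the encoded state symbol sits with no matching pair prefix, so no rule fires on the next step; the one-step lookahead built into $\vdash^*$ was designed for exactly this case and nothing else. That single observation replaces the rule-by-rule audit you were bracing for.
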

\begin{proof}
As usual, let $Q$ be the set of states of a Turing machine $M$, $s$ be the initial state
of $M$, $h$~--- the halting state, $\pi_M$~--- the transition function of
$M$, $0, 1, B$~--- the tape symbols. For all symbols we use the 
dynamic binary coding scheme described in Section~\ref{semi-thue-owf}.

We now present the construction of a derivation set $\Gamma_M$.
\begin{enumerate}
\item\label{rewriting_rule} For every tape symbol $x$: 
$$
\langle \u{x}, \u{x}\rangle.
$$
\item\label{first_rule} 
For each state $q\in Q_M\setminus\{h\}$, $p\in Q$, $a,b\in \{0,1\}$ and 
rule $\pi_M(q,a) = (p,b,R)$: 
$$
\langle \u{qa}, \u{bp}\rangle.
$$
\item For each state $q\in Q_M\setminus\{h\}$, $p\in Q$, $a\in \{0,1\}$ and     
rule $\pi_M(q,B) = (p,a,R)$: 
$$
\langle \u{qB}, \u{bpB}\rangle.
$$
\item For each state $q\in Q_M\setminus\{h\}$, $p\in Q$, $a,b,c\in \{0,1\}$ and 
rule $\pi_M(q,a) = (p,b,L)$:
$$
\langle \u{cqa}, \u{pcb}\rangle.
$$
\item\label{last_rule} 
For each state $q\in Q_M\setminus\{h\}$, $p\in Q$, $a\in \{0,1\}$ and     
rule $\pi_M(q,B) = (p,a,L)$:
$$
\langle \u{cqB}, \u{pcbB}\rangle.
$$
\end{enumerate}
The configuration of $M$ after $t$ steps of computation 
is represented by a string $xqy$, where $q$ is the current state of $M$, $x$ is the 
tape before the head, and $y$ is the tape from the head to the first blank symbol. The 
simulation of a step of $M$ from a configuration $xqy$ consists of at most $|x|$ applications of 
the rule~\ref{rewriting_rule}, followed by one application of one of the 
rules \ref{first_rule}--\ref{last_rule}, followed by $|y|-1$ applications of
rule~\ref{rewriting_rule}. Note that before an application of a rule that 
moves head to the left one could also apply rule~\ref{rewriting_rule}. If the Turing Machine
$M$ is deterministic, then this ``wrong'' application leads to a situation where no 
rule from $\Gamma_M$ is applicable. Thus, we have the following lemma.
\begin{lemma}\label{pcp-turing}
For a deterministic Turing machine $M$ with running time at most $n^2$ 
and its corresponding Post Transformation
system $\Gamma_M$,
$$M(x)=y\mbox{ if and only if }\u{sxB}\vdash_{\Gamma_M}^{*,n^4} \u{hyB}.$$
\end{lemma}

As usual, the rest of the proof closely follows \cite{Goldreich}. Suppose that $g$
is a length-preserving one-way function that works for time not exceeding $n^2$.
By Lemma~\ref{pcp-turing}, there exists a finite system of pair $\Gamma_M$ such that
$\u{sxB}\vdash_{R_M}^{*,n^4} \u{hyB}$ is equivalent to $g(x)=y$.
Therefore, with constant probability solving PTF is equivalent to inverting $g$.
\end{proof}

\begin{remark}
Note the slight change in distributions on inputs and outputs: PTF accepts as input $\u x$
and outputs $\u y$, while the emulated machine $g$ accepts $x$ and outputs $y$. 
Such ``tiny details'' often hold the devil of average-case reasoning. Fortunately,
distributions on $x$ and $\u x$ can be transformed from one to another by a polynomial algorithm, so PTF
is still a weak one-way function (see \cite{Goldreich} for details).
\end{remark}

\section{Complete one-way functions and \dNP-hard combinatorial problems}

Both our constructions of a complete one-way function look very similar
to the construction on the Tiling complete one-way function. 
This naturally leads to the question:
in what other combinatorial settings can one apply the same reasoning to find a
complete one-way function?

The whole point of this proof is to keep the function both length-preserving and
easily computable. Obvious functions fall into one of two classes.
\begin{enumerate}
\item \emph{Easily computable, but not length-preserving}. For any
\dNP-hard problem, one can construct a hard-to-invert function $f$ that transfers
\emph{protocols} of this problem into its results. This function is hard to invert
on average, but it does not preserve length, and thus it is impossible to translate
a uniform distribution on \emph{outputs} of $f$ into a reasonable distribution on
its \emph{inputs}. The reader is welcome to think of a reasonably uniform distribution
on \emph{proper} tilings that would result in a reasonably uniform distribution on their
upper rows; we believe that to construct such a distribution is either impossible
or requires a major new insight.

\item \emph{Length-preserving, but hard to compute}. Take a \dNP-hard problem and
consider the function that sends its input into its output (e.g. the lowest
row of the tiling into its uppermost row). This function is hard to invert and
length-preserving, but it is also hard to compute, because to compute it one needs
to solve Tiling.
\end{enumerate}

Following Levin, we get around these obstacles by having a \emph{deterministic}
version of a \dNP-hard problem. This time, a Tiling problem produces nontrivial
results only if there always is \emph{only one} proper tile to attach. Similarly,
in Section~\ref{semi-thue-owf} we demanded that there is only one rewriting rule
that applicable on each step (we introduced $\Rightarrow^*$ for this very purpose).
In Section~\ref{pcp-owf} we slightly generalized this idea of determinism, allowing 
fixed length deterministic backtrack.
However, if for all $z\in f^{-1}(y)$ we can do this deterministic procedure, then
we can easily invert $f$. So we need that for most $z$ an indeterminism appears and 
the procedure return $z$.


A combinatorial problem should have two properties in order to hold a complete 
one-way function.
\begin{enumerate}
\item It should have a deterministic restricted version,
like Tiling, string rewriting and modified Post Correspondence.
\item Its deterministic version should be powerful enough to simulate a deterministic
Turing machine. For example, natural 
deterministic Post Correspondence (without any backtrack) is, of course, easy to formulate,
but does not seem to be powerful enough.
\end{enumerate}

Keeping in mind these properties, one is welcome to look for other combinatorial
settings with combinatorial complete one-way functions.

\section{Discussion and further work}

We have shown a new complete one-way function and discussed 
possibilities of other combinatorial settings to
hold complete one-way functions. These functions are combinatorial 
in nature and represent a step towards the 
easy-to-analyze complete cryptographic objects, much like 
\SAT{} is a perfect complete problem for \NP.

However, we are still not quite there. 
Basically, we sample a Turing machine at random and hope to find precisely the hard one.
This distinction is very important for practical implications of our constructions.
We believe that constructing a complete cryptographic problem that has properties
completely analogous to SAT
requires a major new insight, and such a construction represents one of the most important
challenges in modern cryptography.

Another direction would be to find other similar combinatorial problems that can 
hold a complete one-way function. By looking at our one-way functions 
and Levin's Tiling, one could imagine that every
\dNP-complete problem readily yields a complete one-way function. However, there 
is also this subtle
requirement that the problem (or its appropriate restriction) should be deterministic
(compare $\Rightarrow_R^*$ and $\Rightarrow_R$). It would be interesting to restate this
requirement as a formal restriction on the problem setting. This would require some new
definitions and, perhaps, a more general and unified approach to combinatorial problems.

\section*{Acknowledgments}
The authors are very grateful to Dima Grigoriev, Edward A. Hirsch and 
Yuri Matiyasevich for helpful comments and fruitful discussions.

\bibliographystyle{alpha}
\newcommand{\etalchar}[1]{$^{#1}$}

\end{document}